\newlength{\figwidth}
\def\@IEEEinterspaceratioM{0.265}
\def\@IEEEinterspaceMINratioM{0.1651}
\def\@IEEEinterspaceMAXratioM{0.38}
\def\@IEEEinterspaceratioB{0.31}
\def\@IEEEinterspaceMINratioB{0.19}
\def\@IEEEinterspaceMAXratioB{0.38}
\newcommand{\safemath}[2]{\newcommand{#1}{\ensuremath{#2}\xspace}}
\safemath{\bma}{\mathbf{a}}
\safemath{\bmb}{\mathbf{b}}
\safemath{\bmc}{\mathbf{c}}
\safemath{\bmd}{\mathbf{d}}
\safemath{\bme}{\mathbf{e}}
\safemath{\bmf}{\mathbf{f}}
\safemath{\bmg}{\mathbf{g}}
\safemath{\bmh}{\mathbf{h}}
\safemath{\bmi}{\mathbf{i}}
\safemath{\bmj}{\mathbf{j}}
\safemath{\bmk}{\mathbf{k}}
\safemath{\bml}{\mathbf{l}}
\safemath{\bmm}{\mathbf{m}}
\safemath{\bmn}{\mathbf{n}}
\safemath{\bmo}{\mathbf{o}}
\safemath{\bmp}{\mathbf{p}}
\safemath{\bmq}{\mathbf{q}}
\safemath{\bmr}{\mathbf{r}}
\safemath{\bms}{\mathbf{s}}
\safemath{\bmt}{\mathbf{t}}
\safemath{\bmu}{\mathbf{u}}
\safemath{\bmv}{\mathbf{v}}
\safemath{\bmw}{\mathbf{w}}
\safemath{\bmx}{\mathbf{x}}
\safemath{\bmy}{\mathbf{y}}
\safemath{\bmz}{\mathbf{z}}
\safemath{\bmzero}{\mathbf{0}}
\safemath{\bmone}{\mathbf{1}}
\bmdefine{\biad}{a}
\bmdefine{\bibd}{b}
\bmdefine{\bicd}{c}
\bmdefine{\bidd}{d}
\bmdefine{\bied}{e}
\bmdefine{\bifd}{f}
\bmdefine{\bigd}{g}
\bmdefine{\bihd}{h}
\bmdefine{\biid}{i}
\bmdefine{\bijd}{j}
\bmdefine{\bikd}{k}
\bmdefine{\bild}{l}
\bmdefine{\bimd}{m}
\bmdefine{\bind}{n}
\bmdefine{\biod}{o}
\bmdefine{\bipd}{p}
\bmdefine{\biqd}{q}
\bmdefine{\bird}{r}
\bmdefine{\bisd}{s}
\bmdefine{\bitd}{t}
\bmdefine{\biud}{u}
\bmdefine{\bivd}{v}
\bmdefine{\biwd}{w}
\bmdefine{\bixd}{x}
\bmdefine{\biyd}{y}
\bmdefine{\bizd}{z}
\bmdefine{\bixid}{\xi}
\bmdefine{\bilambdad}{\lambda}
\bmdefine{\bimud}{\mu}
\bmdefine{\bithetad}{\theta}
\bmdefine{\biphid}{\phi}
\safemath{\bmia}{\biad}
\safemath{\bmib}{\bibd}
\safemath{\bmic}{\bicd}
\safemath{\bmid}{\bidd}
\safemath{\bmie}{\bied}
\safemath{\bmif}{\bifd}
\safemath{\bmig}{\bigd}
\safemath{\bmih}{\bihd}
\safemath{\bmii}{\biid}
\safemath{\bmij}{\bijd}
\safemath{\bmik}{\bikd}
\safemath{\bmil}{\bild}
\safemath{\bmim}{\bimd}
\safemath{\bmin}{\bind}
\safemath{\bmio}{\biod}
\safemath{\bmip}{\bipd}
\safemath{\bmiq}{\biqd}
\safemath{\bmir}{\bird}
\safemath{\bmis}{\bisd}
\safemath{\bmit}{\bitd}
\safemath{\bmiu}{\biud}
\safemath{\bmiv}{\bivd}
\safemath{\bmiw}{\biwd}
\safemath{\bmix}{\bixd}
\safemath{\bmiy}{\biyd}
\safemath{\bmiz}{\bizd}
\safemath{\bmxi}{\bixid}
\safemath{\bmlambda}{\bilambdad}
\safemath{\bmmu}{\bimud}
\safemath{\bmtheta}{\bithetad}
\safemath{\bmphi}{\biphid}
\safemath{\bA}{\mathbf{A}}
\safemath{\bB}{\mathbf{B}}
\safemath{\bC}{\mathbf{C}}
\safemath{\bD}{\mathbf{D}}
\safemath{\bE}{\mathbf{E}}
\safemath{\bF}{\mathbf{F}}
\safemath{\bG}{\mathbf{G}}
\safemath{\bH}{\mathbf{H}}
\safemath{\bI}{\mathbf{I}}
\safemath{\bJ}{\mathbf{J}}
\safemath{\bK}{\mathbf{K}}
\safemath{\bL}{\mathbf{L}}
\safemath{\bM}{\mathbf{M}}
\safemath{\bN}{\mathbf{N}}
\safemath{\bO}{\mathbf{O}}
\safemath{\bP}{\mathbf{P}}
\safemath{\bQ}{\mathbf{Q}}
\safemath{\bR}{\mathbf{R}}
\safemath{\bS}{\mathbf{S}}
\safemath{\bT}{\mathbf{T}}
\safemath{\bU}{\mathbf{U}}
\safemath{\bV}{\mathbf{V}}
\safemath{\bW}{\mathbf{W}}
\safemath{\bX}{\mathbf{X}}
\safemath{\bY}{\mathbf{Y}}
\safemath{\bZ}{\mathbf{Z}}
\safemath{\bZero}{\mathbf{0}}
\safemath{\bOne}{\mathbf{1}}
\safemath{\bDelta}{\mathbf{\Delta}}
\safemath{\bLambda}{\mathbf{\UpLambda}}
\safemath{\bPhi}{\mathbf{\Upphi}}
\safemath{\bSigma}{\mathbf{\Upsigma}}
\safemath{\bOmega}{\mathbf{\Upomega}}
\safemath{\bTheta}{\mathbf{\Uptheta}}
\bmdefine{\biAd}{A}
\bmdefine{\biBd}{B}
\bmdefine{\biCd}{C}
\bmdefine{\biDd}{D}
\bmdefine{\biEd}{E}
\bmdefine{\biFd}{F}
\bmdefine{\biGd}{G}
\bmdefine{\biHd}{H}
\bmdefine{\biId}{I}
\bmdefine{\biJd}{J}
\bmdefine{\biKd}{K}
\bmdefine{\biLd}{L}
\bmdefine{\biMd}{M}
\bmdefine{\biOd}{N}
\bmdefine{\biPd}{O}
\bmdefine{\biQd}{P}
\bmdefine{\biRd}{R}
\bmdefine{\biSd}{S}
\bmdefine{\biTd}{T}
\bmdefine{\biUd}{U}
\bmdefine{\biVd}{V}
\bmdefine{\biWd}{W}
\bmdefine{\biXd}{X}
\bmdefine{\biYd}{Y}
\bmdefine{\biZd}{Z}
\bmdefine{\biDelta}{\Delta}
\bmdefine{\biLambda}{\Lambda}
\bmdefine{\biPhi}{\Phi}
\bmdefine{\biSigma}{\Sigma}
\bmdefine{\biOmega}{\Omega}
\bmdefine{\biTheta}{\Theta}
\safemath{\bimA}{\biAd}
\safemath{\bimB}{\biBd}
\safemath{\bimC}{\biCd}
\safemath{\bimD}{\biDd}
\safemath{\bimE}{\biEd}
\safemath{\bimF}{\biFd}
\safemath{\bimG}{\biGd}
\safemath{\bimH}{\biHd}
\safemath{\bimI}{\biId}
\safemath{\bimJ}{\biJd}
\safemath{\bimK}{\biKd}
\safemath{\bimL}{\biLd}
\safemath{\bimM}{\biMd}
\safemath{\bimN}{\biNd}
\safemath{\bimO}{\biOd}
\safemath{\bimP}{\biPd}
\safemath{\bimQ}{\biQd}
\safemath{\bimR}{\biRd}
\safemath{\bimS}{\biSd}
\safemath{\bimT}{\biTd}
\safemath{\bimU}{\biUd}
\safemath{\bimV}{\biVd}
\safemath{\bimW}{\biWd}
\safemath{\bimX}{\biXd}
\safemath{\bimY}{\biYd}
\safemath{\bimZ}{\biZd}
\safemath{\bimDelta}{\biDelta}
\safemath{\bimLambda}{\biLambda}
\safemath{\bimPhi}{\biPhi}
\safemath{\bimSigma}{\biSigma}
\safemath{\bimOmega}{\biOmega}
\safemath{\bimTheta}{\biTheta}
\safemath{\setA}{\mathcal{A}}
\safemath{\setB}{\mathcal{B}}
\safemath{\setC}{\mathcal{C}}
\safemath{\setD}{\mathcal{D}}
\safemath{\setE}{\mathcal{E}}
\safemath{\setF}{\mathcal{F}}
\safemath{\setG}{\mathcal{G}}
\safemath{\setH}{\mathcal{H}}
\safemath{\setI}{\mathcal{I}}
\safemath{\setJ}{\mathcal{J}}
\safemath{\setK}{\mathcal{K}}
\safemath{\setL}{\mathcal{L}}
\safemath{\setM}{\mathcal{M}}
\safemath{\setN}{\mathcal{N}}
\safemath{\setO}{\mathcal{O}}
\safemath{\setP}{\mathcal{P}}
\safemath{\setQ}{\mathcal{Q}}
\safemath{\setR}{\mathcal{R}}
\safemath{\setS}{\mathcal{S}}
\safemath{\setT}{\mathcal{T}}
\safemath{\setU}{\mathcal{U}}
\safemath{\setV}{\mathcal{V}}
\safemath{\setW}{\mathcal{W}}
\safemath{\setX}{\mathcal{X}}
\safemath{\setY}{\mathcal{Y}}
\safemath{\setZ}{\mathcal{Z}}
\safemath{\emptySet}{\varnothing}
\safemath{\colA}{\mathscr{A}}
\safemath{\colB}{\mathscr{B}}
\safemath{\colC}{\mathscr{C}}
\safemath{\colD}{\mathscr{D}}
\safemath{\colE}{\mathscr{E}}
\safemath{\colF}{\mathscr{F}}
\safemath{\colG}{\mathscr{G}}
\safemath{\colH}{\mathscr{H}}
\safemath{\colI}{\mathscr{I}}
\safemath{\colJ}{\mathscr{J}}
\safemath{\colK}{\mathscr{K}}
\safemath{\colL}{\mathscr{L}}
\safemath{\colM}{\mathscr{M}}
\safemath{\colN}{\mathscr{N}}
\safemath{\colO}{\mathscr{O}}
\safemath{\colP}{\mathscr{P}}
\safemath{\colQ}{\mathscr{Q}}
\safemath{\colR}{\mathscr{R}}
\safemath{\colS}{\mathscr{S}}
\safemath{\colT}{\mathscr{T}}
\safemath{\colU}{\mathscr{U}}
\safemath{\colV}{\mathscr{V}}
\safemath{\colW}{\mathscr{W}}
\safemath{\colX}{\mathscr{X}}
\safemath{\colY}{\mathscr{Y}}
\safemath{\colZ}{\mathscr{Z}}
\safemath{\opA}{\mathbb{A}}
\safemath{\opB}{\mathbb{B}}
\safemath{\opC}{\mathbb{C}}
\safemath{\opD}{\mathbb{D}}
\safemath{\opE}{\mathbb{E}}
\safemath{\opF}{\mathbb{F}}
\safemath{\opG}{\mathbb{G}}
\safemath{\opH}{\mathbb{H}}
\safemath{\opI}{\mathbb{I}}
\safemath{\opJ}{\mathbb{J}}
\safemath{\opK}{\mathbb{K}}
\safemath{\opL}{\mathbb{L}}
\safemath{\opM}{\mathbb{M}}
\safemath{\opN}{\mathbb{N}}
\safemath{\opO}{\mathbb{O}}
\safemath{\opP}{\mathbb{P}}
\safemath{\opQ}{\mathbb{Q}}
\safemath{\opR}{\mathbb{R}}
\safemath{\opS}{\mathbb{S}}
\safemath{\opT}{\mathbb{T}}
\safemath{\opU}{\mathbb{U}}
\safemath{\opV}{\mathbb{V}}
\safemath{\opW}{\mathbb{W}}
\safemath{\opX}{\mathbb{X}}
\safemath{\opY}{\mathbb{Y}}
\safemath{\opZ}{\mathbb{Z}}
\safemath{\opZero}{\mathbb{O}}
\safemath{\identityop}{\opI}
\safemath{\veca}{\bma}
\safemath{\vecb}{\bmb}
\safemath{\vecc}{\bmc}
\safemath{\vecd}{\bmd}
\safemath{\vece}{\bme}
\safemath{\vecf}{\bmf}
\safemath{\vecg}{\bmg}
\safemath{\vech}{\bmh}
\safemath{\veci}{\bmi}
\safemath{\vecj}{\bmj}
\safemath{\veck}{\bmk}
\safemath{\vecl}{\bml}
\safemath{\vecm}{\bmm}
\safemath{\vecn}{\bmn}
\safemath{\veco}{\bmo}
\safemath{\vecp}{\bmp}
\safemath{\vecq}{\bmq}
\safemath{\vecr}{\bmr}
\safemath{\vecs}{\bms}
\safemath{\vect}{\bmt}
\safemath{\vecu}{\bmu}
\safemath{\vecv}{\bmv}
\safemath{\vecw}{\bmw}
\safemath{\vecx}{\bmx}
\safemath{\vecy}{\bmy}
\safemath{\vecz}{\bmz}
\safemath{\veczero}{\bmzero}
\safemath{\vecone}{\bmone}
\safemath{\vecxi}{\bmxi}
\safemath{\veclambda}{\bmlambda}
\safemath{\vecmu}{\bmmu}
\safemath{\vectheta}{\bmtheta}
\safemath{\vecphi}{\bmphi}
\safemath{\matA}{\bA}
\safemath{\matB}{\bB}
\safemath{\matC}{\bC}
\safemath{\matD}{\bD}
\safemath{\matE}{\bE}
\safemath{\matF}{\bF}
\safemath{\matG}{\bG}
\safemath{\matH}{\bH}
\safemath{\matI}{\bI}
\safemath{\matJ}{\bJ}
\safemath{\matK}{\bK}
\safemath{\matL}{\bL}
\safemath{\matM}{\bM}
\safemath{\matN}{\bN}
\safemath{\matO}{\bO}
\safemath{\matP}{\bP}
\safemath{\matQ}{\bQ}
\safemath{\matR}{\bR}
\safemath{\matS}{\bS}
\safemath{\matT}{\bT}
\safemath{\matU}{\bU}
\safemath{\matV}{\bV}
\safemath{\matW}{\bW}
\safemath{\matX}{\bX}
\safemath{\matY}{\bY}
\safemath{\matZ}{\bZ}
\safemath{\matzero}{\bmzero}
\safemath{\matDelta}{\bDelta}
\safemath{\matLambda}{\bLambda}
\safemath{\matPhi}{\bPhi}
\safemath{\matSigma}{\bSigma}
\safemath{\matOmega}{\bOmega}
\safemath{\matTheta}{\bTheta}
\safemath{\matidentity}{\matI}
\safemath{\matone}{\matO}
\safemath{\rnda}{A}
\safemath{\rndb}{B}
\safemath{\rndc}{C}
\safemath{\rndd}{D}
\safemath{\rnde}{E}
\safemath{\rndf}{F}
\safemath{\rndg}{G}
\safemath{\rndh}{H}
\safemath{\rndi}{I}
\safemath{\rndj}{J}
\safemath{\rndk}{K}
\safemath{\rndl}{L}
\safemath{\rndm}{M}
\safemath{\rndn}{N}
\safemath{\rndo}{O}
\safemath{\rndp}{P}
\safemath{\rndq}{Q}
\safemath{\rndr}{R}
\safemath{\rnds}{S}
\safemath{\rndt}{T}
\safemath{\rndu}{U}
\safemath{\rndv}{V}
\safemath{\rndw}{W}
\safemath{\rndx}{X}
\safemath{\rndy}{Y}
\safemath{\rndz}{Z}
\safemath{\rveca}{\bimA}
\safemath{\rvecb}{\bimB}
\safemath{\rvecc}{\bimC}
\safemath{\rvecd}{\bimD}
\safemath{\rvece}{\bimE}
\safemath{\rvecf}{\bimF}
\safemath{\rvecg}{\bimG}
\safemath{\rvech}{\bimH}
\safemath{\rveci}{\bimI}
\safemath{\rvecj}{\bimJ}
\safemath{\rveck}{\bimK}
\safemath{\rvecl}{\bimL}
\safemath{\rvecm}{\bimM}
\safemath{\rvecn}{\bimN}
\safemath{\rveco}{\bomO}
\safemath{\rvecp}{\bimP}
\safemath{\rvecq}{\bimQ}
\safemath{\rvecr}{\bimR}
\safemath{\rvecs}{\bimS}
\safemath{\rvect}{\bimT}
\safemath{\rvecu}{\bimU}
\safemath{\rvecv}{\bimV}
\safemath{\rvecw}{\bimW}
\safemath{\rvecx}{\bimX}
\safemath{\rvecy}{\bimY}
\safemath{\rvecz}{\bimZ}
\safemath{\rvecxi}{\bmxi}
\safemath{\rveclambda}{\bmlambda}
\safemath{\rvecmu}{\bmmu}
\safemath{\rvectheta}{\bmtheta}
\safemath{\rvecphi}{\bmphi}
\safemath{\rmatA}{\bimA}
\safemath{\rmatB}{\bimB}
\safemath{\rmatC}{\bimC}
\safemath{\rmatD}{\bimD}
\safemath{\rmatE}{\bimE}
\safemath{\rmatF}{\bimF}
\safemath{\rmatG}{\bimG}
\safemath{\rmatH}{\bimH}
\safemath{\rmatI}{\bimI}
\safemath{\rmatJ}{\bimJ}
\safemath{\rmatK}{\bimK}
\safemath{\rmatL}{\bimL}
\safemath{\rmatM}{\bimM}
\safemath{\rmatN}{\bimN}
\safemath{\rmatO}{\bimO}
\safemath{\rmatP}{\bimP}
\safemath{\rmatQ}{\bimQ}
\safemath{\rmatR}{\bimR}
\safemath{\rmatS}{\bimS}
\safemath{\rmatT}{\bimT}
\safemath{\rmatU}{\bimU}
\safemath{\rmatV}{\bimV}
\safemath{\rmatW}{\bimW}
\safemath{\rmatX}{\bimX}
\safemath{\rmatY}{\bimY}
\safemath{\rmatZ}{\bimZ}
\safemath{\rmatDelta}{\bimDelta}
\safemath{\rmatLambda}{\bimLambda}
\safemath{\rmatPhi}{\bimPhi}
\safemath{\rmatSigma}{\bimSigma}
\safemath{\rmatOmega}{\bimOmega}
\safemath{\rmatTheta}{\bimTheta}
\newenvironment{textbmatrix}{	\setlength{\arraycolsep}{2.5pt}%
								\big[\begin{matrix}}{\end{matrix}\big]%
								\raisebox{0.08ex}{\vphantom{M}}}
\def\be{\begin{equation}}
\def\ee{\end{equation}}
\def\een{\nonumber \end{equation}}
\def\mat{\begin{bmatrix}}
\def\emat{\end{bmatrix}}
\def\btm{\begin{textbmatrix}}
\def\etm{\end{textbmatrix}}
\def\ba#1\ea{\begin{align}#1\end{align}}
\def\bas#1\eas{\begin{align*}#1\end{align*}}
\def\bs#1\es{\begin{split}#1\end{split}} 
\def\bg#1\eg{\begin{gather}#1\end{gather}}
\def\bml#1\eml{\begin{multline}#1\end{multline}}
\def\bi#1\ei{\begin{itemize}#1\end{itemize}}
\newcommand{\subtext}[1]{\text{\fontfamily{cmr}\fontshape{n}\fontseries{m}\selectfont{}#1}}
\newcommand{\lefto}{\mathopen{}\left}
\newcommand{\sub}[1]{\ensuremath{_{\subtext{#1}}}} 
\DeclareMathOperator{\Prob}{\opP}			
\DeclareMathOperator{\Exop}{\opE}			
\DeclareMathOperator{\landauo}{\mathit{o}}
\DeclareMathOperator{\landauO}{\mathcal{O}}
\newcommand{\Ex}[2]{\ensuremath{\Exop_{#1}\lefto[#2\right]}} 	
\newcommand{\abs}[1]{\left\lvert#1\right\rvert}		
\newcommand{\vecnorm}[1]{\lVert#1\rVert}		
\newcommand{\tp}[1]{\ensuremath{#1^{T}}} 		
\newcommand{\herm}[1]{\ensuremath{#1^{H}}} 	
\safemath{\dirac}{\delta}					
\safemath{\krond}{\dirac}					
\newcommand{\allo}[2]{\ensuremath{#1=1,2,\ldots,#2}}
\safemath{\upto}{\uparrow}
\safemath{\downto}{\downarrow}
\safemath{\iu}{j}							
\safemath{\ev}{\lambda}						
\safemath{\hilseqspace}{l^{2}}				
\newcommand{\banachfunspace}[1]{\setL^{#1}}	
\safemath{\hilfunspace}{\banachfunspace{2}}	
\safemath{\snr}{\rho} 				
\safemath{\No}{N_0}							
\safemath{\Es}{E_s}							
\safemath{\Eb}{E_b}							
\safemath{\EbNo}{\frac{\Eb}{\No}}
\safemath{\EsNo}{\frac{\Es}{\No}}
\DeclareMathOperator{\CHop}{\ensuremath{\opH}} 
\safemath{\tvir}{\rndh_{\CHop}}				
\safemath{\tvtf}{\rndl_{\CHop}}				
\safemath{\spf}{\rnds_{\CHop}}				
\safemath{\bff}{H_{\CHop}}					
\safemath{\ircf}{r_{h}}						
\safemath{\tftvcf}{r_{s}}					
\safemath{\tfcf}{r_{l}}						
\safemath{\bfcf}{r_{H}}						
\safemath{\tcorr}{c_h}						
\safemath{\scf}{c_{s}}						
\safemath{\tfcorr}{c_{l}}					
\safemath{\fcorr}{c_{H}}						
\safemath{\mi}{I}							
\safemath{\capacity}{C}						
\newcommand{\iid}{i.i.d.\@\xspace}
\safemath{\normal}{\mathcal{N}}			
\safemath{\jpg}{\mathcal{CN}}			
\safemath{\mchain}{\leftrightarrow}		
\newcommand{\given}{\,\vert\,}				
\safemath{\dB}{\,\mathrm{dB}}
\safemath{\dBm}{\,\mathrm{dBm}}
\safemath{\Hz}{\,\mathrm{Hz}}
\safemath{\kHz}{\,\mathrm{kHz}}
\safemath{\MHz}{\,\mathrm{MHz}}
\safemath{\GHz}{\,\mathrm{GHz}}
\safemath{\s}{\,\mathrm{s}}
\safemath{\ms}{\,\mathrm{ms}}
\safemath{\mus}{\,\mathrm{\text{\textmu}s}}
\safemath{\ns}{\,\mathrm{ns}}
\safemath{\ps}{\,\mathrm{ps}}
\safemath{\meter}{\,\mathrm{m}}
\safemath{\mm}{\,\mathrm{mm}}
\safemath{\cm}{\,\mathrm{cm}}
\safemath{\m}{\,\mathrm{m}}
\safemath{\W}{\,\mathrm{W}}
\safemath{\mW}{\, \mathrm{mW}}
\safemath{\J}{\,\mathrm{J}}
\safemath{\K}{\,\mathrm{K}}
\safemath{\bit}{\,\mathrm{bit}}
\safemath{\nat}{\,\mathrm{nat}}
\safemath{\define}{\triangleq}			
\newcommand{\sothat}{\,:\,}				
\safemath{\equivalent}{\sim}
\safemath{\distas}{\sim}					
\safemath{\sdiff}{\Delta}				
\safemath{\reals}{\mathbb{R}}
\safemath{\positivereals}{\reals_{+}}
\safemath{\integers}{\mathbb{Z}}
\safemath{\posint}{\integers_{+}}
\safemath{\naturals}{\mathbb{N}}
\safemath{\posnaturals}{\naturals_{+}}
\safemath{\complexset}{\mathbb{C}}
\safemath{\rationals}{\mathbb{Q}}
\newcommand*{\fancyrefapplabelprefix}{app}		
\newcommand*{\fancyrefthmlabelprefix}{thm}		
\newcommand*{\fancyreflemlabelprefix}{lem}		
\newcommand*{\fancyrefcorlabelprefix}{cor}		
\newcommand*{\fancyrefdeflabelprefix}{def}		
\newcommand*{\fancyrefproplabelprefix}{prop}		
\newcommand*{\fancyrefexelabelprefix}{exe}		
\safemath{\outvec}{\vecy} 
\safemath{\outp}{y}		  
\safemath{\inpvec}{\vecx} 
\safemath{\inp}{x}		  
\safemath{\wgnvec}{\vecw} 
\safemath{\wgn}{w}		  
\safemath{\altwgn}{z}	  
\safemath{\ch}{s}		 	  
\safemath{\chvec}{\vecs}	
\safemath{\outvecnn}{\vecr}	
\safemath{\chgenvec}{\vech}	
\safemath{\chgenvecdir}{\widehat{\chgenvec}}	
\safemath{\chgen}{h}		
\safemath{\inpprobmeas}{\mathsf{Q}}
\safemath{\altinpprobmeas}{\widetilde{\inpprobmeas}}
\safemath{\outprobmeas}{\mathsf{R}}
\safemath{\chtran}{\mathsf{W}}
\safemath{\dtime}{n}			
\safemath{\chmat}{\matS}		
\safemath{\corrmat}{\matR}	
\safemath{\rankcorr}{Q}		
\safemath{\inpmin}{\inp\sub{min}}
\safemath{\sqrtmat}{\matP}	
\safemath{\sqrtmatrow}{\vecp} 
\newcommand{\id}[1]{\vecu_{#1}}
\safemath{\blocklength}{N}	
\safemath{\prelog}{\chi}	
\safemath{\euler}{\gamma}	
\safemath{\difent}{\mathrm{h}}		
\safemath{\difentsphere}{\difent_{\text{sphere}}} 
\safemath{\numtx}{M\sub{T}} 
\safemath{\numrx}{M\sub{R}} 
\safemath{\eig}{\lambda}	
\safemath{\eigmin}{\eig\sub{min}}	
\newcommand{\relent}[2]{D(#1 \| #2)} 
\newtheorem{thm}{Theorem}
\newtheorem{lem}[thm]{Lemma}
\safemath{\capacityas}{\capacity_{\text{as}}}		
\safemath{\capacityconstr}{\capacity_{\setK}}	
\safemath{\asmap}{d}								
\safemath{\inpvecone}{\inpvec_{\rankcorr}}
\safemath{\outvecone}{\outvec_{\rankcorr}}
\safemath{\inpvectwo}{\inpvec_{\blocklength-\rankcorr}}
\safemath{\outvectwo}{\outvec_{\blocklength-\rankcorr}}
\DeclareMathOperator{\diag}{diag}
\safemath{\outpdf}{\mathsf{r}}
\safemath{\wpone}{w.p.1}
\begin{document}
\IEEEoverridecommandlockouts

\title{High-SNR Capacity of Wireless Communication Channels in the Noncoherent Setting: A Primer}



\author{
\IEEEauthorblockN{Giuseppe Durisi}
\IEEEauthorblockA{Department of Signals and Systems\\
 Chalmers University of Technology, 41296 Gothenburg, Sweden\\
E-mail: \{durisi\}@chalmers.se\\}     
\IEEEauthorblockN{Helmut B\"olcskei}
\IEEEauthorblockA{Department of Information Technology and Electrical Engineering\\
 ETH Zurich, 8092 Zurich, Switzerland\\
E-mail: \{boelcskei\}@nari.ee.ethz.ch\\}
}

\maketitle

\begin{abstract}
 	This paper, mostly tutorial in nature, deals with the problem of characterizing the capacity of fading channels in the high signal-to-noise ratio (SNR) regime.
	We focus on the practically relevant \emph{noncoherent setting}, where neither  transmitter nor  receiver know the channel realizations, but both are aware of the channel law.
	We present, in an intuitive and accessible form, two tools, first proposed by Lapidoth \& Moser (2003), of fundamental importance to high-SNR capacity analysis:
	the \emph{duality approach} and the \emph{escape-to-infinity property} of capacity-achieving distributions.
	Furthermore, we apply these tools to refine some of the results that appeared previously in the literature and to simplify the corresponding proofs.
\end{abstract}
\IEEEpeerreviewmaketitle
\section{Introduction}
%
%

Most wireless communication systems operate in the \emph{noncoherent setting} where neither transmitter nor receiver have \emph{a priori} information on the realization of the underlying fading channel. 
As channel state information is typically acquired by allocating transmission time and/or bandwidth to channel estimation
(a typical example is the use of pilot symbols~\cite{tong04-11a}),
a problem of significant practical relevance is to determine the optimal amount of resources to be used for this task.
This problem can be addressed in a fundamental fashion by determining the Shannon capacity (i.e., the ultimate limit on the rate of reliable communication~\cite{cover06-a}) in the noncoherent setting.
Unfortunately, corresponding analytical results are exceedingly difficult to obtain, even for simple channel models~\cite{abou-faycal01-05a}; nevertheless, significant progress has been made during the past few years by studying the capacity behavior in the asymptotic regimes of high and low signal-to-noise ratio (SNR).
Throughout this paper, we shall deal exclusively with the high-SNR regime.  
The capacity behavior at high SNR turns out to be very sensitive to the channel model used~\cite{lapidoth03-10a,lapidoth05-07a,durisi11-03b}.
In this paper, we shall focus on a channel model---the \emph{correlated block-fading} model~\cite{liang04-09a,morgenshtern10-06a}---that is simple and yet rich enough to illustrate some of the possible asymptotic dependencies of capacity on SNR, namely, logarithmic with different \emph{pre-log} factors~\cite{liang04-09a,lapidoth05-07a,zheng02-02a,hochwald00-03a}, or double-logarithmic~\cite{lapidoth03-10a}. 
The aim of this tutorial paper is two-fold:
\begin{itemize}
	\item We present, in an intuitive and accessible manner, two tools that turn out to be exceedingly useful in the characterization of capacity at high SNR: the \emph{duality approach} and the \emph{escape-to-infinity property} of  capacity-achieving distributions.              
	These tools were first introduced in~\cite{lapidoth03-10a}.                   
	\item We use these tools to refine a result that appeared previously in~\cite{liang04-09a} and to provide an alternative and much simpler proof of a result in~\cite{zheng02-02a,hochwald00-03a}.      
	Furthermore, we develop insights into the use of duality by exploiting the geometry of the correlated block-fading model.
\end{itemize}

%

\subsection*{Notation} 
Uppercase boldface letters denote matrices and lowercase boldface letters
designate vectors.
Uppercase sans-serif letters (e.g., \inpprobmeas) denote probability distributions,\footnote{We will refer to probability distributions simply as distributions in the remainder of the paper.} while lowercase sans-serif letters  (e.g., \outpdf) are reserved for probability density functions. 
The superscripts~$\tp{}$ and~$\herm{}$ stand for
transposition and Hermitian transposition,
respectively. 
We denote the identity matrix of dimension~$\blocklength\times \blocklength$ by~$\matI_{\blocklength}$; $\diag\{\veca\}$ is the diagonal square matrix whose main diagonal contains the entries of the vector~\veca, and $\eig_{q}(\matA)$ stands for the $q$th largest eigenvalue of the Hermitian positive-semidefinite matrix~\matA.
For a random vector~$\vecx$ with distribution~$\inpprobmeas$, we
write~$\vecx\distas\inpprobmeas$. 
We denote expectation by~$\Ex{}{\cdot}$, and use the
notation~$\Ex{\vecx}{\cdot}$ or $\Ex{\inpprobmeas}{\cdot}$ to stress that expectation is taken with
respect to $\vecx\distas \inpprobmeas$. 
We write~$\relent{\inpprobmeas(\cdot)}{\outprobmeas(\cdot)}$ for the
relative entropy between the distributions~$\inpprobmeas$
and~$\outprobmeas$~\cite[Sec.~8.5]{cover06-a}. 
Furthermore, $\jpg(\veczero,\corrmat)$ stands for the distribution of a
circularly-symmetric~\cite[Def.~24.3.2]{lapidoth09a} complex Gaussian random vector with covariance
matrix~$\corrmat$.
For two functions~$f(x)$ and~$g(x)$, the
notation~$f(x) = \landauO(g(x))$, $x\to \infty$, means that
$\lim \sup_{x\to\infty}\bigl|f(x)/g(x)\bigr|<\infty$, and
$f(x) = \landauo(g(x))$, $x\to \infty$, means that $\lim_{x\to\infty}\bigl|f(x)/g(x)\bigr|=0$.    
Finally, $\log(\cdot)$ indicates the natural logarithm.
 

\subsection{The Channel Model} 
\label{sec:the_channel_model}
%
In our quest for simplicity of exposition, we chose to focus on the correlated block-fading channel model~\cite{liang04-09a,morgenshtern10-06a}.
In this model, the channel changes in an independent fashion across blocks of \blocklength discrete-time samples and exhibits correlated fading within each block (with the same fading statistics for all blocks).
The input-output (IO) relation corresponding to one such block is given by:%
%
\begin{equation}
	\label{eq:io_general}
	\outvec=\diag\{\chgenvec\} \inpvec +\wgnvec.
\end{equation}
Here, $\inpvec=\tp{[\inp_1 \dots \inp_{\blocklength}]} \in \complexset^{\blocklength}$ is the (random) input vector, which we assume to satisfy the average-power constraint 
\begin{equation}\label{eq:avp}
	\frac{1}{\blocklength}\Ex{}{\vecnorm{\inpvec}^2}\leq  \snr.
\end{equation}
The vector~$\wgnvec\distas\jpg(\veczero,\matI_{\blocklength})$ represents additive white Gaussian noise (AWGN), and~$\chgenvec\distas\jpg(\veczero,\corrmat)$ contains the fading channel coefficients.
The vectors \inpvec, \chgenvec, and \wgnvec are mutually independent. 
We assume that~\corrmat has rank~\rankcorr ($1\leq \rankcorr\leq \blocklength$) and that  the main-diagonal entries of~\corrmat are all equal to $1$. 
%
Throughout the paper, we consider the noncoherent setting where transmitter and receiver know the statistics of 
\chgenvec, but not its realizations.

The model we just described may seem contrived at first sight.
Yet, it is of practical relevance for at least two reasons.
First, it captures the essence of  channel variations (in time) in an accurate but simple way: 
the rank 
\rankcorr of \corrmat corresponds to the minimum number of entries of \chgenvec that need to be known to perfectly recover the whole vector (in the absence of noise);
therefore, larger \rankcorr corresponds to faster channel variation. 
Second, when \corrmat is circulant, the IO relation~\eqref{eq:io_general} coincides with the IO relation---in the frequency domain---of a cyclic-prefix orthogonal frequency-division multiplexing system~\cite{peled80-04a} operating over a frequency-selective channel with $\rankcorr$ uncorrelated taps.
In other words, the model in~\eqref{eq:io_general} can be thought of as the dual of the widely used intersymbol-interference channel model.
Independence across blocks is a sensible assumption for systems employing time-division multiple access or frequency hopping~\cite{marzetta99-01a}.
Finally, we remark that for the special case $\rankcorr=1$, the channel model in~\eqref{eq:io_general} reduces to the \emph{piecewise-constant} block-fading channel model previously used in numerous papers such as~\cite{marzetta99-01a,hochwald00-03a,zheng02-02a}.
%

\subsection{Channel Capacity} 
\label{sec:channel_capacity}
The capacity of the channel in~\eqref{eq:io_general} is given by
\begin{equation}\label{eq:capacity}
	\capacity(\snr)=\frac{1}{\blocklength}\sup_{\inpprobmeas} \mi(\inpvec;\outvec).
\end{equation}
Here, $\mi(\inpvec;\outvec)$ denotes the mutual information~\cite[Sec. 8.5]{cover06-a} between \inpvec and \outvec in~\eqref{eq:io_general}, and the supremum is taken over all distributions~\inpprobmeas on~\inpvec that satisfy the average-power constraint~\eqref{eq:avp}.
Because the variance of the entries of~\chgenvec and \wgnvec is normalized to one, we can interpret \snr as the receive SNR.

The literature is essentially void of analytic expressions for $\capacity(\snr)$, even for the simplest case $
\blocklength=1$.
Nevertheless, as we shall see in the next section, the high-SNR behavior of $\capacity(\snr)$ can be characterized fairly well.

 \subsection{Known Results and Our Contributions} 
 \label{sec:known_results_and_outline}
For the \emph{general case} $1\leq \rankcorr \leq \blocklength$, Liang and Veeravalli showed that
\cite[Props.~3 and~4]{liang04-09a}
\begin{equation}\label{eq:liang_capacity_result}	
 \capacity(\snr)= \frac{\blocklength-\rankcorr}{\blocklength}\log\snr +\landauO(\log\log\snr), \quad \snr \to \infty.
\end{equation}
This result is sufficient to characterize the capacity pre-log \prelog, defined as the asymptotic ratio between  capacity and the logarithm of SNR as SNR goes to infinity:
 \begin{equation*}
 	\prelog=\lim_{\snr \to \infty} \dfrac{\capacity(\snr)}{\log \snr}.
 \end{equation*}
The pre-log can be interpreted as the fraction of signal-space dimensions that can be used for communication.
From~\eqref{eq:liang_capacity_result} we find the pre-log to be given by the difference of two terms, i.e.,  $\prelog=1-\rankcorr/\blocklength$.
The first term can be thought of as the capacity pre-log when the channel is known perfectly at the receiver (in this case, $\prelog=1$~\cite{biglieri98-10a}); the second term quantifies the loss in signal-space dimensions due to the lack of channel knowledge. 
Note that $\rankcorr/\blocklength$ is the smallest fraction of entries of the $\blocklength$-dimensional vector \chgenvec that need to be known to reconstruct the whole vector in the absence of noise.\footnote{As we shall see, neglecting additive noise in~\eqref{eq:io_general} yields useful insights on the capacity pre-log.}       
Hence, we can further interpret the penalty term $\rankcorr/\blocklength$ as the fraction of signal-space dimensions in which \emph{pilot symbols} need to be transmitted to allow the receiver to learn the channel.

When $\rankcorr=\blocklength$, i.e., the channel correlation matrix has full rank, \eqref{eq:liang_capacity_result} implies that the pre-log is equal to $0$.
It turns out that in this case the $\landauO(\log\log\snr)$ term in~\eqref{eq:liang_capacity_result}  is tight and capacity grows double-logarithmically in SNR.
This  surprising result was proven in~\cite[Lem.~5]{liang04-09a}.
In~\fref{sec:the_full_rank_case}, we shall refine the result in~\cite[Lem.~5]{liang04-09a} by providing the following, more accurate, high-SNR capacity characterization:
\begin{multline}\label{eq:capacity_high_snr_full_rank}
 	\capacity(\snr)=\log\log \snr -\gamma -1 \\ -\frac{1}{\blocklength}\sum_{q=1}^\blocklength\log\eig_q(\corrmat)+\landauo(1), \quad \snr\to\infty.
 \end{multline}
This result characterizes capacity (for $\rankcorr=\blocklength$) up to a $\landauo(1)$ term (i.e., a term that vanishes as $
\snr \to \infty$).
In contrast, the expression provided in~\cite[Lem.~5]{liang04-09a} agrees with capacity only up to a $\landauO(1)$ term (i.e., a term that is bounded as $
\snr \to \infty$).    

The most important tool in the proof of~\eqref{eq:capacity_high_snr_full_rank} is the \emph{duality approach}, a technique first introduced in~\cite{lapidoth03-10a} to characterize the capacity of stationary ergodic fading channels  with finite differential entropy rate.
The essence of the duality approach is that it allows one to obtain tight upper bounds on $\capacity(\snr)$ by choosing \emph{appropriate distributions} on the output~\outvec.
Compared to the treatment in~\cite{lapidoth03-10a}, our goal in Sections \ref{sec:duality} and \ref{sec:the_full_rank_case} is to provide the simplest and most accessible proofs for the main results underlying the duality approach.
This comes at the cost of generality (in terms of noise and fading statistics). 

While finding a capacity characterization that---like~\eqref{eq:capacity_high_snr_full_rank}---is tight up to a $\landauo(1)$ term for all \rankcorr with $1\leq \rankcorr \leq \blocklength$ is an interesting open problem,
for the special case $\rankcorr=1$ (with $\rankcorr<\blocklength$) the following result was reported in~\cite{hochwald00-03a,zheng02-02a}: 
\begin{multline}\label{eq:known_result_rank_1}
\capacity(\snr)=\frac{\blocklength-1}{\blocklength}\bigl[ \log\snr +\log\blocklength-\euler-1 \bigr] \\- \frac{\log \Gamma(\blocklength)}{\blocklength} +\landauo(1),\quad \snr \to \infty.
\end{multline}
Here, $\gamma$ denotes the Euler-Mascheroni constant, and $\Gamma(\cdot)$ is the Gamma function~\cite[Eq.~(197)]{lapidoth03-10a}.
The proof of~\eqref{eq:known_result_rank_1} provided in~\cite{hochwald00-03a} is based on a rather technical argument and does not seem to explicitly exploit the  geometry in the problem, i.e., the fact that \inpvec and \outvec are collinear in the absence of noise.
The proof in~\cite{zheng02-02a} does exploit this geometry through an apposite change of variables, and applies to the multiple-antenna setting as well.

In \fref{sec:the_rank_1_case}, we present a simple, alternative proof of~\eqref{eq:known_result_rank_1} that, differently from the proofs in~\cite{hochwald00-03a,zheng02-02a}, is based on duality and exploits the geometry in the problem to motivate the choice of the output distribution.
Our proof needs another tool put forward in~\cite{lapidoth03-10a}: the \emph{escape to infinity} property of the capacity achieving distribution.
This property, which we review in~\fref{sec:escape_to_infinity}, allows one to restrict the maximization in~\eqref{eq:capacity} to a smaller set of distributions.

\section{The Toolbox} 
\label{sec:the_toolbox}%
\subsection{The Duality Approach} 
\label{sec:duality}%
To prove~\eqref{eq:capacity_high_snr_full_rank}  and~\eqref{eq:known_result_rank_1}, we sandwich capacity between a lower and an upper bound that agree up to a $\landauo(1)$ term.
Establishing capacity lower bounds is, in principle, relatively simple: it suffices to evaluate the mutual information in~\eqref{eq:capacity} for an input distribution~\inpprobmeas that satisfies the average-power constraint.
Obviously, care must be exercised in choosing~\inpprobmeas, so as to ensure that the resulting bound is tight in the limit $\snr \to \infty$ (see~\fref{sec:a_capacity_lower_bound_rank_1} for a concrete example).

Capacity upper bounds are more difficult to find because of the need for
maximization over the set of eligible input distributions. 
To single out the main difficulty with this optimization problem, it is convenient to denote the conditional distribution of \outvec given \inpvec  as  $\chtran(\cdot\given\inpvec)$ and to use the symbol $\inpprobmeas\chtran$ to indicate the distribution induced on \outvec by the input distribution $\inpprobmeas$ and by the ``channel'' $\chtran(\cdot\given\inpvec)$.
By the definition of mutual information~\cite[Sec. 8.5]{cover06-a} we have that
\begin{align}\label{eq:mutual_information_rel_entropy}
	\mi(\inpvec;\outvec)=\Ex{\inpprobmeas}{\relent{\chtran(\cdot\given\inpvec)}{(\inpprobmeas\chtran)(\cdot)}}.
\end{align}
As the right-hand side (RHS) of~\eqref{eq:mutual_information_rel_entropy}  is a rather complicated function of~\inpprobmeas,  the maximization in~\eqref{eq:capacity} is difficult to carry out.
The idea behind duality is to upper-bound the RHS of~\eqref{eq:mutual_information_rel_entropy} by replacing $\inpprobmeas\chtran$ by a distribution that does not depend on $\inpprobmeas$. 
Concretely, let $\outprobmeas$ be an arbitrary distribution on \outvec.    
Then
\begin{align}\label{eq:duality_ub}
		\mi(\inpvec;\outvec)&\stackrel{(a)}{=}
		%
		%
		\Ex{\inpprobmeas}{\relent{\chtran(\cdot\given \inpvec)}{\outprobmeas(\cdot)}} - \relent{(\inpprobmeas\chtran)(\cdot)}{\outprobmeas(\cdot)} \notag\\
		&\stackrel{(b)}{\leq} \Ex{\inpprobmeas}{\relent{\chtran(\cdot\given \inpvec)}{\outprobmeas(\cdot)}}.
 \end{align}
Here, (a) follows from Tops{\o}e's identity~\cite{topsoe67-a} and (b) is a consequence of the nonnegativity of relative entropy~\cite[Thm.~2.6.3]{cover06-a}.     
The RHS of~\eqref{eq:duality_ub}  is easier to deal with than $ \mi(\inpvec;\outvec)$.
In fact, as we shall illustrate in  Sections~\ref{sec:the_rank_1_case} and \ref{sec:the_full_rank_case}, it is possible---for an appropriate choice of  $\outprobmeas$---to find an asymptotically tight upper bound on $\Ex{\inpprobmeas}{\relent{\chtran(\cdot\given \inpvec)}{\outprobmeas(\cdot)}}$ that holds for every \inpprobmeas satisfying the average-power constraint~\eqref{eq:avp}.
By~\eqref{eq:capacity}, this upper bound constitutes an upper bound on $\capacity(\snr)$.

As a side remark, we note that the inequality~\eqref{eq:duality_ub} holds with equality when~$\outprobmeas$ coincides with $\inpprobmeas\chtran$.
Hence,~\eqref{eq:duality_ub} yields the following  expression for mutual information:
\begin{equation}\label{eq:duality_for_mutual_information}	
	\mi(\inpvec;\outvec)= \inf_{\outprobmeas}\Ex{\inpprobmeas}{
	\relent{\chtran(\cdot\given\inpvec)}{\outprobmeas(\cdot)}
	}.
\end{equation}
Through further manipulations (see~\cite{csiszar82-a,lapidoth03-10a} for details), the identity~\eqref{eq:duality_for_mutual_information} yields a \emph{dual} expression for capacity, with the maximization over the input distribution in~\eqref{eq:capacity}  replaced by a minimization over the output distribution.
This is why the technique is referred to as duality approach.    

An appropriate choice of the output distribution~$\outprobmeas$ is crucial for the bound in~\eqref{eq:duality_ub}  to be tight.
Throughout the paper, the output distribution \outprobmeas with density
\begin{equation}\label{eq:generic_output_distribution}
	\outpdf(\outvec)= \frac{\Gamma(\blocklength)}{\pi^\blocklength \beta^{\alpha}\Gamma(\alpha)} \vecnorm{\outvec}^{2(\alpha-\blocklength)} e^{-\vecnorm{\outvec}^2/\beta}, \quad \outvec \in \complexset^{\blocklength}
\end{equation}
will play a prominent role. 
Here, $\beta=\blocklength(
\snr+1)/\alpha$, where $\alpha$ is a free parameter whose meaning will become clear later.  
This output distribution was put forward in~\cite{lapidoth03-10a} in a more general setting.
The main features of this distribution are that $\outvec$ is isotropically distributed and that $\vecnorm{\outvec}^2$  is Gamma distributed with parameter $\alpha$. 
In \fref{sec:the_rank_1_case}, we will show that, for the piecewise-constant block-fading channel model (i.e., $\rankcorr=1$), this choice for the output distribution can be motivated through simple geometric intuition.                   
 \subsection{Escape-To-Infinity Property} 
 \label{sec:escape_to_infinity}    
Duality simplifies the maximization over the input distribution in~\eqref{eq:capacity}, at the cost of getting an  upper bound on capacity.
This simplification, together with an appropriate choice of \inpprobmeas to obtain a matching capacity lower bound, is enough to establish~\eqref{eq:capacity_high_snr_full_rank}, as we shall see in \fref{sec:the_full_rank_case}. 
To prove~\eqref{eq:known_result_rank_1}, however, we need an additional tool.
Specifically, we will make use of the fact that the asymptotic behavior of $\capacity(\snr)$ does not change if we constrain the input distributions \inpprobmeas in~\eqref{eq:capacity} to be supported strictly outside a sphere of arbitrarily large radius.  
We formalize this result, which turns out to hold for almost all wireless channel models of practical interest~\cite{lapidoth03-10a,lapidoth06-02a}, in the following theorem.
In view of~\eqref{eq:known_result_rank_1},  we focus on the case $\rankcorr=1$.
\begin{thm}\label{thm:outside_sphere}
	   Fix an arbitrary $\snr_0>0$ and let $\setK=\{\inpvec \in \complexset^{\blocklength} \sothat \vecnorm{\inpvec}^2\leq \snr_0\}$.   
	Denote by $\capacity(\snr)$ the capacity of the channel with IO relation~\eqref{eq:io_general} (with $\rankcorr=1$)
	under the average-power constraint~\eqref{eq:avp}.
	Furthermore, denote by $\capacityconstr(\snr)$ the capacity of the same channel under the additional constraint---besides~\eqref{eq:avp}---that $\inpvec \notin \setK$ with probability one (\wpone).
	Then 
	\begin{multline*}	
		\lim_{\snr \to \infty} \Bigl[ \capacity(\snr) - \bigl(1-{1}/{\blocklength}\bigr)\log \snr \Bigr]   \\
		=      \lim_{\snr \to \infty} \Bigl[ \capacityconstr(\snr) - \bigl(1-{1}/{\blocklength}\bigr)\log \snr \Bigr].
	\end{multline*}
\end{thm}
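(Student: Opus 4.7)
The trivial inequality $\capacityconstr(\snr) \leq \capacity(\snr)$ follows because the $\capacityconstr$ feasible set is a subset of the $\capacity$ feasible set. My plan for the converse $\capacity(\snr) \leq \capacityconstr(\snr) + \landauo(1)$ combines a splitting bound with an \emph{escape-to-infinity} argument driven by the asymptotic~\eqref{eq:liang_capacity_result}.

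For the splitting bound, take any $\inpprobmeas$ satisfying~\eqref{eq:avp} and let $T \define \ind{\inpvec\in\setK}$ and $p \define \inpprobmeas(\setK)$. Since $T$ is a deterministic function of $\inpvec$, the chain rule for mutual information yields
\begin{equation*}
	\mi(\inpvec;\outvec) \leq H_2(p) + p\,\mi_0 + (1-p)\,\mi_1,
\end{equation*}
where $H_2$ is the binary entropy, $\mi_0\define\mi(\inpvec;\outvec\given T=1)$ is the mutual information when $\inpvec$ is drawn from $\inpprobmeas$ conditioned on $\setK$ (bounded by a constant $B_0 \define \blocklength\log(1+\snr_0)$ via an elementary Gaussian entropy bound, since the output covariance is uniformly bounded on the compact set $\setK$), and $\mi_1\define\mi(\inpvec;\outvec\given T=0)$ is the mutual information under the conditional law $\altinpprobmeas\define\inpprobmeas(\cdot\given\inpvec\notin\setK)$, which is admissible for $\capacityconstr(\snr/(1-p))$ because its per-symbol average power is at most $\snr/(1-p)$.

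The escape-to-infinity step comes next: for any near-optimal sequence $\{\inpprobmeas^{(\snr)}\}$ with $\mi(\inpvec;\outvec) \geq \blocklength\capacity(\snr)-1$, set $p_\snr\define\inpprobmeas^{(\snr)}(\setK)$. Using the crude bound $\mi_1 \leq \blocklength\capacity(\snr/(1-p_\snr))$ together with~\eqref{eq:liang_capacity_result} in the splitting inequality, cancelling the common $\blocklength(1-1/\blocklength)\log\snr$ on both sides yields $\blocklength p_\snr(1-1/\blocklength)\log\snr \leq \landauO(\log\log\snr)$, so $p_\snr = \landauO(\log\log\snr/\log\snr) \to 0$. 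Feeding $p_\snr\to 0$ back into the splitting but now with the sharper bound $\mi_1 \leq \blocklength\capacityconstr(\snr/(1-p_\snr))$ (available because $\altinpprobmeas^{(\snr)}$ is supported outside $\setK$), both $H_2(p_\snr)\to 0$ and $p_\snr B_0\to 0$. Writing $g(\snr)\define\capacityconstr(\snr) - (1-1/\blocklength)\log\snr$ and expanding $(1-p_\snr)\log(\snr/(1-p_\snr))$ gives
\begin{equation*}
	\capacity(\snr) - (1-1/\blocklength)\log\snr \leq (1-p_\snr)\,g\bigl(\snr/(1-p_\snr)\bigr) - p_\snr(1-1/\blocklength)\log\snr + \landauo(1);
\end{equation*}
the middle term is non-positive, and the first has the same $\limsup$ as $g(\snr)$ since $p_\snr\to 0$, so combining with $\capacityconstr\leq\capacity$ forces equality of the two limits.

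The main obstacle is the escape-to-infinity step, because the entire closing argument hinges on converting the vanishing $p_\snr$ into a term that dominates the splitting loss; this works only because~\eqref{eq:liang_capacity_result} pins capacity down to within $\landauO(\log\log\snr)$, which is small relative to $p_\snr\log\snr$. A secondary subtlety is that the $\mi(T;\outvec)$ contribution must be recognised as $H_2(p_\snr)$, which vanishes, rather than the worst-case constant $\log 2$: otherwise a stubborn gap of $\log 2/\blocklength$ would survive the $\limsup$ comparison.
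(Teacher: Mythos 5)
Your argument follows a genuinely different route from the paper's. The paper proves the theorem by invoking two external results---\cite[Thm.~8]{lapidoth06-02a}, that logarithmic growth of capacity forces the capacity-achieving input to escape to infinity, and \cite[Thm.~4.12]{lapidoth03-10a}, that escape to infinity makes the constrained and unconstrained high-SNR asymptotics coincide---together with~\eqref{eq:liang_capacity_result}. You instead reprove the relevant chain of implications directly for this specific channel. The individual steps are sound: since $T$ is a deterministic function of $\inpvec$, the chain rule gives $\mi(\inpvec;\outvec)=\mi(T;\outvec)+\mi(\inpvec;\outvec\given T)\leq H_2(p)+p\,\mi_0+(1-p)\,\mi_1$; the constant bound on $\mi_0$ holds (revealing $\chgenvec$ to the receiver and using Jensen with $\Ex{}{\abs{\chgen_i}^2}=1$ gives $\mi_0\leq\blocklength\log(1+\snr_0/\blocklength)\leq B_0$); the conditional law of $\inpvec$ given $\inpvec\notin\setK$ is admissible for $\capacityconstr(\snr/(1-p))$; and the bootstrap cancellation of $(\blocklength-1)\log\snr$ against the two-sided estimate~\eqref{eq:liang_capacity_result} does yield $p_\snr=\landauO(\log\log\snr/\log\snr)\to 0$. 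This buys a self-contained proof that makes the logical dependence on~\eqref{eq:liang_capacity_result} fully explicit and avoids the general machinery of~\cite{lapidoth03-10a,lapidoth06-02a}.

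There is, however, one small genuine gap in the closing step. You fix the near-optimality slack at $1$, writing $\mi(\inpvec;\outvec)\geq\blocklength\capacity(\snr)-1$, but then absorb it into the $\landauo(1)$ of the final display; a fixed constant does not vanish. Carried through, it leaves a persistent $1/\blocklength$ on the right-hand side, so what your inequalities actually yield is
\begin{equation*}
\limsup_{\snr\to\infty}\Bigl[\capacity(\snr)-\bigl(1-1/\blocklength\bigr)\log\snr\Bigr]\leq\limsup_{\snr\to\infty}\Bigl[\capacityconstr(\snr)-\bigl(1-1/\blocklength\bigr)\log\snr\Bigr]+\frac{1}{\blocklength},
\end{equation*}
which does not close against $\capacityconstr\leq\capacity$. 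The fix is immediate: take a sequence with slack $\delta_\snr\to 0$, or fix an arbitrary $\delta>0$ and let $\delta\to 0$ at the very end (the escape step gives $p_\snr=\landauO(\log\log\snr/\log\snr)$ uniformly in $\delta$). Separately, your claim that $(1-p_\snr)g(\snr/(1-p_\snr))$ ``has the same $\limsup$ as $g(\snr)$'' is stronger than you can justify from $p_\snr\to 0$ alone; the one-sided bound $\limsup_\snr(1-p_\snr)g(\snr/(1-p_\snr))\leq\limsup_\snr g(\snr)$---which follows from $p_\snr\to 0$, $g(\snr)=\landauO(\log\log\snr)$, and $\snr/(1-p_\snr)\to\infty$---is what you actually need, and it suffices.
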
 
\begin{proof}
The high-SNR capacity expansion~\eqref{eq:liang_capacity_result} implies that the capacity pre-log is\footnote{It is worth mentioning that the proof of~\eqref{eq:liang_capacity_result} does not make use of~\fref{thm:outside_sphere}; so there is no cyclic argument here.} $1-1/\blocklength$.
The logarithmic growth of capacity in SNR allows us to invoke~\cite[Thm.~8]{lapidoth06-02a} and conclude that the capacity-achieving input distribution must \emph{escape to infinity}~\cite[Def.~4.11]{lapidoth03-10a}, i.e., that for all  $\snr_0 \geq 0$ there exists a family of input distributions  $\{\inpprobmeas_{\snr}\}_{\snr\geq 0}$ (parametrized with respect to \snr)  satisfying $(1/\blocklength)\Ex{\inpprobmeas_\snr}{\vecnorm{\inpvec}^2}\leq \snr$, such that, when $\inpvec \distas \inpprobmeas_\snr$,
	\begin{equation*}
		\lim_{\snr \to \infty} \{ \capacity(\snr) - \mi(\inpvec;\outvec) \}=0
	\end{equation*}
	and
	\begin{equation*}
		\lim_{\snr \to \infty}  \Prob\{\vecnorm{\inpvec}^2\leq \snr_0 \}=0.
	\end{equation*}                                       
	The proof is concluded by noting that the escape-to-infinity property is a sufficient condition for \fref{thm:outside_sphere} to hold, as a consequence of \cite[Thm.~4.12]{lapidoth03-10a}.
\end{proof}
\section{High-SNR Capacity Asymptotics} 
\label{sec:high_snr_results}
\subsection{The Rank-One Case} 
\label{sec:the_rank_1_case}     
When $\rankcorr=1$, we can rewrite~\eqref{eq:io_general} in the following (more convenient) form
\begin{equation}
	\label{eq:io_rank1}
	\outvec=\ch\inpvec +\wgnvec
\end{equation}
where~$\ch\distas\jpg(0,1)$.  
The high-SNR capacity expansion~\eqref{eq:liang_capacity_result} implies that the capacity pre-log of the channel in~\eqref{eq:io_rank1} is  given by $1-1/\blocklength$.
This is in agreement with the intuition we provided in~\fref{sec:known_results_and_outline}: one pilot symbol per block is enough to learn the channel in the absence of noise.
We next provide a different interpretation of this result, which is of geometric nature and sheds light on how to select input and output distributions to get capacity bounds that are tight as $\snr \to \infty$.

 \subsubsection{Geometric Intuition} 
 \label{sec:geometric_intuition}
Let \inpvec be an arbitrary vector in $\complexset^{\blocklength}$.    
This vector can be specified by identifying
 \begin{inparaenum}[i)]
 	\item the linear subspace spanned by \inpvec, i.e., the complex line passing through the origin and \inpvec   and
 	\item the point on that line corresponding to \inpvec (i.e., a complex number).
 \end{inparaenum}
If we neglect additive noise, the IO relation in~\eqref{eq:io_rank1}  reduces to $\outvec=\ch\inpvec$.  
As~\ch varies,~\outvec spans the line~\inpvec lies on.
 In other words---as pointed out in~\cite{zheng02-02a}---the random channel coefficient~\ch
  destroys the information about~\inpvec specified in the second step of our description above,
but leaves the information about the linear subspace spanned by \inpvec unchanged.
To summarize, when the random channel coefficient~\ch is not known to the receiver, the information that the receiver can recover about the transmitted signal~\inpvec is the line on which~\inpvec lies. 
 But a complex line in~$\complexset^{\blocklength}$ is fully characterized by $\blocklength-1$ complex parameters.\footnote{More formally, the set of lines passing through the origin of~$\complexset^{\blocklength}$ forms a manifold (the complex projective space $\complexset\setP^{N-1}$) of $N-1$ complex dimensions~\cite{boothby86-a}.} 
Hence, the received signal ``carries'' $\blocklength-1$ parameters describing \inpvec.
This number, divided by \blocklength, coincides with the capacity pre-log.
%
%
%


\subsubsection{A Capacity Lower Bound} 
\label{sec:a_capacity_lower_bound_rank_1}
The geometry unveiled in the previous section suggests to use the direction of \inpvec, but not its magnitude, to convey information.
This insight is helpful in choosing an input distribution that yields a tight capacity lower bound.
Concretely, we take  $\inpvec=\sqrt{\blocklength\snr}\cdot\id{\inpvec}$ where $\id{\inpvec}$ is uniformly distributed on the unit sphere in $\complexset^{\blocklength}$. 
We use this input distribution, which trivially satisfies the average-power constraint~\eqref{eq:avp}, to lower-bound capacity as follows:                                          
\begin{align}
	\blocklength \cdot \capacity(\snr)&\geq \mi(\inpvec;\outvec)= \difent(\outvec)-\difent(\outvec\given \inpvec) \notag\\
	&\geq \difent(\outvec \given \wgnvec) -   \difent(\outvec\given \inpvec) \notag\\  
	&= \difent(\underbrace{\ch\inpvec}_{\define\outvecnn})-\difent(\outvec\given\inpvec).\label{eq:diff_ent_difference}
\end{align}
Here, $\difent(\cdot)$ denotes differential entropy~\cite[Sec.~8.1]{cover06-a}, the second inequality follows because conditioning reduces differential entropy~\cite[Sec.~8.6]{cover06-a}, and the last equality follows because differential entropy is invariant to translations~\cite[Thm.~8.6.3]{cover06-a} and \wgnvec is independent of \ch and \inpvec.
To compute $\difent(\outvecnn)$, it is convenient to switch to polar coordinates, i.e., $\outvecnn \mapsto (\vecnorm{\outvecnn},\id{\outvecnn})$, where $\id{\outvecnn}=\outvecnn/\vecnorm{\outvecnn}$.
The change of variable theorem then yields~\cite[Lem.~6.17]{lapidoth03-10a}:
\begin{equation}
	\label{eq:diff_ent_and_change_of_variable}
	\difent(\outvecnn)=\difent(\vecnorm{\outvecnn})+\difentsphere(\id{\outvecnn}\given \vecnorm{\outvecnn})+(2\blocklength-1)\Ex{}{\log\vecnorm{\outvecnn}}.
\end{equation}
Here,~$\difentsphere(\cdot)$ denotes the differential entropy computed with respect to the area measure on the unit sphere in~$\complexset^\blocklength$ \cite[p.~2457]{lapidoth03-10a}.
%
By the choice of the input distribution, we have $\vecnorm{\outvecnn}=\sqrt{\blocklength\snr}\abs{\ch}$.
Furthermore, because~$\id{\inpvec}$ is uniformly distributed on the unit sphere and~$\ch$ is circularly symmetric (i.e., the phase of \ch is uniformly distributed on $[-\pi,\pi)$ and is independent of $\abs{\ch}$~\cite[Prop. 24.2.6]{lapidoth09a}), it follows that \outvecnn is isotropically distributed~\cite[Def.~6.19]{lapidoth03-10a}.
Hence, $\id{\outvecnn}$ is uniformly distributed on the unit sphere and is independent of $\vecnorm{\outvecnn}$.
Based on these observations, we can now simplify~\eqref{eq:diff_ent_and_change_of_variable} as follows:
 \begin{align}
 	\label{eq:diff_ent_simplified}
 	\difent(\outvecnn)
\stackrel{(a)}{=}&\difent\bigl(\sqrt{\blocklength\snr}\abs{\ch}\bigr)+ \difentsphere(\id{\outvecnn})+(2\blocklength-1)\Ex{}{\log\vecnorm{\outvecnn}} \notag\\
\stackrel{(b)}{=}&\log\sqrt{\blocklength \snr}+\difent(\abs{\ch})+\log\dfrac{2\pi^{\blocklength}}{\Gamma(\blocklength)} \notag
\\&+
(2\blocklength-1)\bigl[\log\sqrt{\blocklength\snr}+ \Ex{}{\log\abs{\ch}}\bigr] \notag \\
\stackrel{(c)}{=}& \blocklength\log(\blocklength \snr) +\difent\bigl(\abs{\ch}^2\bigr)+\log \dfrac{\pi^{\blocklength}}{\Gamma(\blocklength)} \notag
\\&
+(\blocklength-1)\Ex{}{\log\abs{\ch}^2}\notag\\
\stackrel{(d)}{=}&\blocklength\log(\blocklength \snr) +1 + \log \dfrac{\pi^{\blocklength}}{\Gamma(\blocklength)} -(\blocklength-1)\euler.
 \end{align}
Here, in~(a) we used the independence of  $\id{\outvecnn}$ and $\vecnorm{\outvecnn}$ to drop  conditioning in the second term on the RHS of~\eqref{eq:diff_ent_and_change_of_variable}.
In~(b) we used that $\difent(ax)=\log a+\difent(x)$ for $x$ a real-valued random variable and $a$ real and nonnegative;
we also used that  $\id{\outvecnn}$  is uniformly distributed on the unit sphere in~$\complexset^{\blocklength}$, and that, as a consequence, $\difentsphere(\id{\outvecnn})$  is equal to the area of that sphere, i.e., $2\pi^{\blocklength}/\Gamma(\blocklength)$. 
In~(c) we used that 
\begin{equation*}	
	\difent(v)=\difent(v^2)-\Ex{}{\log v} -\log 2
\end{equation*}
for every real nonnegative random variable~$v$ \cite[Lem.~6.15]{lapidoth03-10a},
and~(d) follows because 
 $\Ex{}{\log{\abs{\ch}^2}}=-\euler$ and $\difent(\abs{\ch}^2)=1$ for $\ch\distas \jpg(0,1)$.

Since~\chgenvec is a circularly-symmetric complex Gaussian vector, $\difent(\outvec\given\inpvec)$ on the RHS of~\eqref{eq:diff_ent_difference} admits the following closed-form expression:
\begin{align}
	\label{eq:conditional_diff_entropy}
	\difent(\outvec\given \inpvec)&= \log(\pi e)^\blocklength +\Ex{}{\log(1+\vecnorm{\inpvec}^2)} \notag\\
	&=\log(\pi e)^\blocklength + \log(1+\blocklength \snr) \notag \\
	&= \log(\pi e)^\blocklength + \log (\blocklength \snr) + \landauo(1), \quad \snr \to \infty.
\end{align}
Substituting~\eqref{eq:diff_ent_simplified} and~\eqref{eq:conditional_diff_entropy} into~\eqref{eq:diff_ent_difference}, we get a lower bound on $\capacity(\snr)$ that coincides with the RHS of~\eqref{eq:known_result_rank_1}.

\subsubsection{A Matching Upper Bound} 
\label{sec:a_matching_upper_bound_rank_1}
To obtain an upper bound that matches the lower bound we just found up to a $\landauo(1)$ term, we use duality and the escape-to-infinity property of the capacity-achieving distribution.  
More specifically, as a consequence of~\fref{thm:outside_sphere}, we can, without loss of generality, constrain the maximization of  mutual information in~\eqref{eq:capacity} to  input distributions~\inpprobmeas that satisfy---besides the average-power constraint~\eqref{eq:avp}---the additional constraint $\vecnorm{\inpvec}^2\geq \snr_0$ \wpone.
Here, $\snr_0>0$ is a parameter to be optimized later.
We use duality with the density of \outprobmeas given by~\eqref{eq:generic_output_distribution} with $\alpha=1$.
This choice is again motivated by the geometric considerations in \fref{sec:geometric_intuition}: in the noiseless case, the  density of the output distribution induced by the input distribution used in \fref{sec:a_capacity_lower_bound_rank_1} (to derive a capacity lower bound), equals~\eqref{eq:generic_output_distribution} with $\alpha=1$.
Fix $\snr_0>0$ and take an arbitrary input distribution \inpprobmeas such that $\inpvec\distas\inpprobmeas$ satisfies~\eqref{eq:avp} and  $\vecnorm{\inpvec}^2\geq \snr_0$ \wpone.
By duality, we have that
\begin{align}
	\label{eq:upper_bound_duality_1}
	\mi(\inpvec;\outvec) \leq& \Ex{\inpprobmeas}{\relent{\chtran(\cdot\given \inpvec)}{\outprobmeas(\cdot)}} \notag\\
	=&\Ex{\inpprobmeas\chtran}{\log \frac{1}{\outpdf(\outvec)}}-\difent(\outvec \given \inpvec)    \notag \\     
	=&\log \pi^N + \log[\blocklength(\snr+1)]-\log \Gamma(\blocklength)\notag \\  
	&+ (\blocklength-1) \Ex{\inpprobmeas\chtran}{\log \vecnorm{\outvec}^2}  
	+ \frac{\Ex{\inpprobmeas\chtran}{\vecnorm{\outvec}^2}}{\blocklength(\snr+1)}
	- \difent(\outvec \given \inpvec). 
\end{align}
Here, the first equality follows from straightforward algebraic manipulations; in the second equality we used~\eqref{eq:generic_output_distribution} with $\alpha=1$.
We shall next evaluate or bound the terms on the RHS of~\eqref{eq:upper_bound_duality_1} that depend on~\inpprobmeas.
First, note that
\begin{equation}
	\Ex{\inpprobmeas\chtran}{\vecnorm{\outvec}^2} = \Ex{\inpprobmeas\chtran}{\vecnorm{\ch\inpvec+\wgnvec}^2} \leq \blocklength(\snr+1).
\end{equation}
Here, we used independence of \inpvec and \wgnvec and the power constraint~\eqref{eq:avp}.
To evaluate $\Ex{\inpprobmeas\chtran}{\log \vecnorm{\outvec}^2}$, we proceed as follows: 
first note that
\begin{equation*}	
	   \Ex{\inpprobmeas\chtran}{\log \vecnorm{\outvec}^2} =   \Ex{\inpvec}{\Ex{\ch,\wgnvec}{\log \vecnorm{\outvec}^2 \big |\, \inpvec}}.
\end{equation*}
%
We next use that, given~\inpvec, the random variable $\vecnorm{\outvec}^2$ is distributed as $\sum_{i=1}^{\blocklength-1}\abs{\altwgn_i}^2 +(1+\vecnorm{\inpvec}^2)\abs{\altwgn_{\blocklength}}^2$, where the $\altwgn_i$, $\allo{i}{\blocklength}$, are \iid $\jpg(0,1)$. 
This result follows by observing that, given \inpvec, the output vector \outvec has covariance matrix $\inpvec\herm{\inpvec}+\matI_{\blocklength}$ (whose eigenvalues are $1+\vecnorm{\inpvec}^2$ and $1$ with multiplicity $\blocklength-1$).
Using Jensen's inequality with respect to the random variables~$\altwgn_1, \dots, \altwgn_{\blocklength-1}$, we  obtain the following bound:
\begin{align}
	&\Ex{\inpvec}{\Ex{\ch,\wgnvec}{\log \vecnorm{\outvec}^2\big\lvert\, \inpvec}} \notag\\
	&= \Ex{\inpvec}{\Ex{\altwgn_1, \dots, \altwgn_\blocklength}{\log\Biggl( \sum_{i=1}^{\blocklength-1}\abs{\altwgn_i}^2 +(1+\vecnorm{\inpvec}^2)\abs{\altwgn_{\blocklength}}^2\Biggr)\biggl\vert\, \inpvec} } \notag \\
	&\leq \Ex{\inpvec,\altwgn_{\blocklength}}{\log\bigl(\blocklength-1 + (1+\vecnorm{\inpvec}^2)\abs{\altwgn_{\blocklength}}^2\bigr)}  \notag\\
	&=  \Ex{\inpvec}{\log(1+\vecnorm{\inpvec}^2)} +\Ex{\inpvec,\altwgn_{\blocklength}}{ \log\lefto(\frac{\blocklength-1}{1+\vecnorm{\inpvec}^2}+\abs{\altwgn_{\blocklength}}^2\right)}\label{eq:before_lemma_intermediate}\\
	&\leq\Ex{\inpvec}{\log(1+\vecnorm{\inpvec}^2)} \notag\\
	&\quad+\sup_{\vecnorm{\inpvec}^2\geq \snr_0}\Ex{\altwgn_{\blocklength}}{ \log\lefto(\frac{\blocklength-1}{1+\vecnorm{\inpvec}^2}+\abs{\altwgn_{\blocklength}}^2\right)}.
	\label{eq:before_lemma}
\end{align} 
In the last step, we upper-bounded the second term on the RHS of~\eqref{eq:before_lemma_intermediate} by replacing the expectation over $\inpvec$ by the supremum over all vectors \inpvec satisfying $\vecnorm{\inpvec}^2\geq\snr_0$.      
This is the step where the escape-to-infinity property is used.  
Without this property, the supremum in~\eqref{eq:before_lemma} would be over all \inpvec satisfying $\vecnorm{\inpvec}^2\geq 0$ and the resulting bound would not match (up to a $\landauo(1)$ term) the lower bound obtained in \fref{sec:a_capacity_lower_bound_rank_1}. 
To evaluate the second term on the RHS of~\eqref{eq:before_lemma} we use the following lemma:
\begin{lem}\label{lem:exp_log}
	Let $z\distas \jpg(0,1)$ and take $a>0$. 
	Then
	\begin{align*}
		\Ex{z}{\log(a+\abs{z}^2)} =\underbrace{e^a \Gamma(0,a)+\log a}_{\define g(a)}.
	\end{align*}
	Here $\Gamma(\cdot,\cdot)$ denotes the incomplete Gamma function~\cite[Eq.~(200)]{lapidoth03-10a}.
	The function $g(a)$ is monotonically increasing in $a$.
	Furthermore, $\lim_{a\to 0}g(a)=-\gamma$.
\end{lem}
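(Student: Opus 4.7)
The plan is to compute the expectation directly as a one-dimensional integral, then handle monotonicity and the limit as two separate, short arguments.

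First, I would observe that for $z \distas \jpg(0,1)$, the squared magnitude $\abs{z}^2$ is exponentially distributed with unit mean, so
\bas
\Ex{z}{\log(a+\abs{z}^2)}=\int_0^\infty \log(a+t)\, e^{-t}\, dt.
\eas
Integration by parts with $u=\log(a+t)$ and $dv=e^{-t}\,dt$ gives $[-e^{-t}\log(a+t)]_0^\infty = \log a$ (the boundary term at infinity vanishes because $\log(a+t)$ grows only logarithmically), plus the integral $\int_0^\infty e^{-t}/(a+t)\,dt$. Substituting $u=a+t$ converts this into $e^a\int_a^\infty e^{-u}/u\,du = e^a\,\Gamma(0,a)$, which establishes the closed form.

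For monotonicity, I would note that $\log(a+\abs{z}^2)$ is, pointwise in $z$, a strictly increasing function of $a>0$, so the expectation is strictly increasing in $a$; equivalently, differentiating under the integral sign gives $g'(a)=\Ex{z}{1/(a+\abs{z}^2)}>0$, which is easily justified by dominated convergence since $1/(a+\abs{z}^2)\leq 1/a$ on any interval $a\geq a_0>0$.

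For the limit $a\to 0$, the cleanest route is dominated convergence applied to the original expectation rather than to the closed form (where $\log a$ and $e^a\Gamma(0,a)$ both diverge and must be shown to cancel). Restricting to $0<a\leq 1$, I would split the integrand: on $\{\abs{z}^2\geq 1\}$ it is bounded above by $\log(1+\abs{z}^2)$, and on $\{\abs{z}^2<1\}$ by $\log 2$; from below it is bounded by $\log\abs{z}^2$, which is integrable since $\Ex{}{\log\abs{z}^2}=-\euler$ (a fact already used in~\eqref{eq:diff_ent_simplified}). Dominated convergence then yields $\lim_{a\to 0}g(a)=\Ex{z}{\log\abs{z}^2}=-\euler$. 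The only mildly subtle point is the dominated-convergence verification, but this is routine since $\log(1+\abs{z}^2)$ has finite expectation under $\jpg(0,1)$. No other obstacle is anticipated.
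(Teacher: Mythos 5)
Your proof is correct and follows essentially the paper's route: the paper substitutes $t=a+v$ first and then integrates by parts, reaching the same closed form that you obtain by integrating by parts first and then substituting. For monotonicity and the limit the paper differentiates the closed form to get $g'(a)=e^{a}\Gamma(0,a)\geq 0$ and then sets $a=0$ in the intermediate integral $e^{a}\int_a^{\infty}e^{-t}\log t\,dt$; your pointwise-monotonicity observation and dominated-convergence argument are mildly more elementary variants of the same facts, and arguably make the limit step more rigorous, since the paper leaves the interchange of limit and integral implicit.
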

\begin{proof}
	    Let $v=\abs{z}^2$.
	Then
	\begin{align}
		      \Ex{z}{\log(a+\abs{z}^2)} &= \int_{0}^{\infty} e^{-v} \log(a+v) dv \nonumber\\
		&= e^a \int_a^{\infty} e^{-t} \log t\, dt   \label{eq:lemma_insert_a} \\ 
		&= e^{a}\Gamma(0,a)+\log a. \label{eq:lemma_final}
	\end{align}
	Here, to obtain the second equality we used integration by parts.
	Now let us denote the RHS of~\eqref{eq:lemma_final} by $g(a)$.
	It is easy to verify that $g(a)$ is a monotonic function of $a\geq 0$.
	In fact,
	\begin{equation*}	
		\frac{dg(a)}{da}=e^a \Gamma(0,a)
	\end{equation*}
	which is nonnegative for all $a\geq 0$.
	Finally, the claim that $\lim_{a\to 0}g(a)=-\gamma$ follows from~\eqref{eq:lemma_insert_a} by setting $a=0$. 
\end{proof}
As a consequence of~\fref{lem:exp_log}, we have that
\begin{align*}
	\sup_{\vecnorm{\inpvec}^2\geq \snr_0}\Ex{\altwgn_{\blocklength}}{ \log\lefto(\frac{\blocklength-1}{1+\vecnorm{\inpvec}^2}+\abs{\altwgn_{\blocklength}}^2\right)}=g\lefto(\frac{\blocklength-1}{1+\snr_0} \right).
\end{align*}
Finally,  for the conditional differential entropy term in~\eqref{eq:upper_bound_duality_1} we have 
\begin{align*}
	\difent(\outvec \given \inpvec) =& \log(\pi e)^\blocklength + \Ex{}{\log(1+\vecnorm{\inpvec}^2)}.
\end{align*}
To summarize, we proved that
\begin{align*}
	\mi(\inpvec;\outvec) \leq & \log(\blocklength\snr+\blocklength) +
	  (\blocklength-2)\Ex{}{\log(1+\vecnorm{\inpvec}^2)}\\
	  &+(\blocklength-1)g\lefto(\frac{\blocklength-1}{1+\snr_0} \right) 
	-\log\Gamma(\blocklength) -(\blocklength-1).
\end{align*}
%
Now, using Jensen's inequality on $\Ex{}{\log(1+\vecnorm{\inpvec}^2)}$, we obtain                               
\begin{multline*}	
	\lim_{\snr \to \infty} \bigl[\mi(\inpvec;\outvec)-(\blocklength-1)\log \snr \bigr] \\
	\leq (\blocklength-1) 
	\left[ \log \blocklength -1 + g\lefto(\frac{\blocklength-1}{1+\snr_0}\right)\right] -\log\Gamma(\blocklength).
\end{multline*}
The proof is concluded by recalling that, by \fref{thm:outside_sphere}, the asymptotic behavior of $\capacity(\snr)$ does not change if we constrain \inpprobmeas to satisfy $\vecnorm{\inpvec}^2\geq \snr_0$ \wpone,  and by noting that, by \fref{lem:exp_log}, we can make the term $g((\blocklength-1)/(1+\snr_0))$  to be arbitrarily close to $-\gamma$ by taking $\snr_0$ sufficiently large.

\subsection{The Full-Rank Case} 
\label{sec:the_full_rank_case}
Due to space constraints, we shall  give an outline only of the proof of~\eqref{eq:capacity_high_snr_full_rank} and, furthermore, restrict ourselves to \iid channels, i.e., $\corrmat=\matI_{\blocklength}$. 
We comment on the general case at the end of the section. 
%

First, we note that $\corrmat=\matI_{\blocklength}$ implies that  the channel is memoryless, and, hence, capacity is achieved by \iid inputs.
As a consequence, 
\begin{equation}\label{eq:memoryless_case}
	\sup_{\inpprobmeas} \mi(\inpvec;\outvec)= \blocklength\sup_{\altinpprobmeas}\mi(\inp;\outp).
\end{equation}
Here, $\outp=\ch\inp+\wgn$ with $\ch,\wgn\distas\jpg(0,1)$ and  the supremum is over the distributions~$\altinpprobmeas$ on \inp that satisfy the average-power constraint $\Ex{\altinpprobmeas}{\abs{\inp}^2}\leq \snr$.
The capacity of the memoryless channel $\outp=\ch\inp+\wgn$ was first proven to grow double-logarithmically in SNR in~\cite{taricco97-07a}.
This result was then extended in~\cite[Thm.~4.2]{lapidoth03-10a}  to multiple-antenna channels with general stationary ergodic fading distribution (of finite differential entropy rate) and general noise distributions.
The proof we provide here is based on the duality technique  and is particularly simple, as it exploits the Gaussianity of the fading distribution.   
More specifically, we use duality with the density of $\outprobmeas$ given in~\eqref{eq:generic_output_distribution}, with  $\blocklength=1$ and $\alpha=[1+\log(1+\snr)]^{-1}$.
The choice of $\alpha$ might appear unmotivated, and in fact, differently from the previous section, it is hard to find an intuitive explanation for this choice, besides the fact that it simplifies the proof.       
Consider an arbitrary \altinpprobmeas satisfying  $\Ex{\altinpprobmeas}{\abs{\inp}^2}\leq \snr$;
using~\eqref{eq:duality_ub} and~\eqref{eq:generic_output_distribution}, we obtain the following upper bound on $\mi(\inp;\outp)$:
\begin{align}\label{eq:duality_bound_full_rank}
	\mi(\inp;\outp) \leq&  \log \pi + \alpha \log(1+\snr)  -\alpha\log \alpha  + \log \Gamma(\alpha)  \notag \\
	& +(1-\alpha) \Ex{\altinpprobmeas\chtran}{\log{\abs{\outp}^2}} +\alpha \frac{\Ex{\altinpprobmeas\chtran}{\abs{\outp}^2}}{1+\snr}-\difent(\outp\given \inp)  \notag \\
   %
   	%
   	%
   	%
		%
	 		\leq& \log \pi  +\alpha[1+\log(1+\snr)]-\alpha\log \alpha+ \log \Gamma(\alpha) \notag\\
	 			& + \Ex{\altinpprobmeas\chtran}{\log{\abs{\outp}^2}} -\difent(\outp\given \inp).                     %
\end{align}
The last step follows because $\Ex{\altinpprobmeas\chtran}{\abs{\outp}^2}\leq 1+\snr$ and $\alpha<1$, by assumption, so that
\begin{equation*}	
	(1-\alpha) \Ex{\altinpprobmeas\chtran}{\log{\abs{\outp}^2}} \leq    \Ex{\altinpprobmeas\chtran}{\log{\abs{\outp}^2}}.
\end{equation*}
We continue by establishing that    
\begin{equation*}	
	\Ex{\altinpprobmeas\chtran}{\log{\abs{\outp}^2}} -\difent(\outp\given \inp)= -\gamma-\log \pi -1.
\end{equation*}
This identity follows because 
\begin{equation*}	
	       \difent(\outp\given \inp)=\Ex{}{\log(1+\abs{\inp}^2)}+\log (\pi e)
\end{equation*}
and because, given \inp, the random variable $\abs{\outp}^2$ is distributed as $(1+\abs{\inp}^2)\abs{\altwgn}^2$, where $\altwgn\distas \jpg(0,1)$, so that
\begin{align*}
	    \Ex{\altinpprobmeas\chtran}{\log{\abs{\outp}^2}} &= \Ex{\inp}{\Ex{\ch,\wgn}{\log\abs{\outp}^2 \big| \inp}}   \\  
	&= \Ex{\inp}{\Ex{\altwgn}{\log\bigl[(1+\abs{\inp}^2)\abs{\altwgn}^2\bigr] \,\big| \inp}}\\       
	&=   \Ex{\inp}{\log(1+\abs{\inp}^2)} +\underbrace{\Ex{\altwgn}{\log \abs{\altwgn}^2}}_{-\gamma}.
\end{align*}
%
%
%
%
Finally, since $\alpha[1+\log(1+\snr)] =1$
and since~\cite[Eq. (337)]{lapidoth03-10a}
\begin{equation*}	
	\log \Gamma(\alpha) -\alpha \log \alpha +\log \alpha=\landauo(1), \quad \snr \to \infty,
\end{equation*}
we get
\begin{align}
	\mi(\inp;\outp) \leq& \log \pi  +\underbrace{\alpha[1+\log(1+\snr)]}_{1}    \notag \\
	    &+\underbrace{\log \Gamma(\alpha)-\alpha\log \alpha+\log\alpha}_{\landauo(1),\,\,\,\snr \to \infty} -\log \alpha \notag\\
		& + \underbrace{\Ex{\altinpprobmeas\chtran}{\log{\abs{\outp}^2}} -\difent(\outp\given \inp) }_{-\euler-\log \pi -1}        \\
		\leq&
	 -\log \alpha -\gamma +o(1), \quad \snr \to \infty  \notag\\
					=& \log\log \snr -\gamma + o(1), \quad \snr \to \infty\label{eq:ub_memoryless}.
\end{align}
This upper bound, which suffices to conclude that capacity grows at most double-logarithmically in \snr,  can   actually be  tightened.
A more careful choice of the output distribution  makes the term $\alpha[1+\log(1+\snr)]$ vanish as $\snr \to \infty$, so that $-\gamma$ in~\eqref{eq:ub_memoryless} gets replaced by $-\gamma-1$   (see~\cite[App.~VII]{lapidoth03-10a} for details).
This modified upper bound is  tight in the sense that one can find a capacity lower bound that matches it up to a $\landauo(1)$ term (see~\cite[Thm.~4.16]{lapidoth03-10a}).
To summarize, for  $\corrmat=\matI_{\blocklength}$, we have that
\begin{equation}\label{eq:capacity_iid_summary}
	\capacity(\snr)=\log\log\snr -\gamma-1+\landauo(1), \quad \snr \to \infty.
\end{equation}
We conclude by noting that the double-logarithmic growth of capacity in SNR holds for every full-rank channel covariance matrix \corrmat.
Correlation among the channel entries, however, results in a different constant term in~\eqref{eq:capacity_iid_summary}.
More specifically, the final result in~\eqref{eq:capacity_high_snr_full_rank} follows from~\cite[Lem.~4.5]{lapidoth03-10a} and from an adaptation of~\cite[Thm.~4.41]{lapidoth03-10a} to the block-fading setup considered here. 
\section{Open Problems} 
\label{sec:open_problems} 
Duality is the main tool we used to establish the novel capacity expansion~\eqref{eq:capacity_high_snr_full_rank} for the full-rank case  and to provide an alternative, simple proof of~\eqref{eq:known_result_rank_1} for the rank-1 case  (i.e., the piecewise-constant block-fading channel model). 
For the latter case, in particular, we showed how the geometry of the communication problem at hand can be used to find an output distribution that yields an asymptotically tight upper bound.
Finding a $\landauo(1)$-accurate capacity characterization when $1<\rankcorr<\blocklength$ is an interesting open problem.

Throughout the paper, we focused exclusively on the single-antenna setup.
In the multiple-antenna case, not even a pre-log characterization is available when $1<\rankcorr<\blocklength$. 
A pre-log lower bound for the single-input multiple-output (SIMO) case has been  obtained recently in~\cite{morgenshtern10-06a}.  
Surprisingly, the bound in~\cite{morgenshtern10-06a} implies that the SIMO pre-log can be larger than the pre-log in the single-input single-output case.
Establishing whether this bound is tight is an open problem.
For two further channel models of practical interest, namely, the single-antenna frequency-selective and doubly-selective fading channel the state of affairs is similar: pre-log lower bounds have been reported in~\cite{vikalo04-09a} and~\cite{kannu10-06a}, respectively.
Establishing whether these bounds are tight is again an open problem.

%
%


\bibliographystyle{IEEEtran}
\bibliography{IEEEabrv,publishers,confs-jrnls,giubib}

\end{document}